\newtheorem{theorem}{Theorem}
\newtheorem{remark}{Remark}
\title{Top-k Stabbing Interval Queries}
\author{Waseem Akram {\normalsize{and}} Sanjeev Saxena\\
Computer Science and Engineering,\\
Indian Institute of Technology, Kanpur,\\
India 208016}
\date{5 November, 2024}
\begin{document}

\maketitle

\begin{abstract}
    We investigate a weighted variant of the interval stabbing problem, where the goal is to design an efficient data structure for a given set $\mathcal{I}$ of weighted intervals such that, for a query point $q$ and an integer $k>0$, we can report the $k$ intervals with largest weights among those stabbed by $q$. 
    In this paper, we present a linear space solution with $O(\log n + k)$ query time.
    Moreover, we also present another trade-off for the problem.
\end{abstract}

\section{Introduction}\label{sec:intro}
The \emph{interval stabbing problem} is an important problem in computational geometry \cite{deBerg08LSI,Amagata24,Chazelle86FS}, which asks to compute all intervals stabbed by a query point. In this paper, we study a natural variant of this problem where, instead of computing all stabbed intervals, the goal is to find a subset of the stabbed intervals that ``best'' represent them \cite{Amagata24}. Formally, the problem is defined as follows.
Let $\mathcal{I}$ be a set of $n$ intervals on the real line, each with a real valued weight. The set $\mathcal{I}$ is static.
The goal is to preprocess $\mathcal{I}$ so that, for a query value $q$ and a positive integer $k$, the $k$ intervals stabbed by $q$ with largest weights can be reported efficiently. An interval $[s_i, e_i]\in \mathcal{I}$ is stabbed by $q$ if and only if $s_i\le q \le e_i$. The problem is useful in cryptocurrency and stock applications \cite{Qiao16,Zhang15}.
We will refer this variant as \emph{the top-$k$ interval stabbing problem}.

In this paper, we present a linear space solution with $O(\log n + k)$ query time. In the rank space, the query time can be reduced to $O(k)$. Moreover, we also present a solution that uses only segment tree structure. It takes $O(n\log n)$ space and time to build, and $O(\log n + k \log\log n)$ time to answer a query. In rank space, the query time can be improved to $O(\log n + k)$.

Amagata et al. \cite{Amagata24} has proposed two algorithms for the problem; one algorithm uses interval tree structure while the other one employs segment tree structure. The algorithm based on interval tree structure takes linear space and $O(\sqrt{n}\log n + k)$ time to answer a query. The preprocessing takes $O(n\log n)$ time. The other algorithm based on segment tree achieves $O(\log n + k)$ query time, the preprocessing takes $O(n \log n \log\log n)$ time and $O(n\log^2 n)$ space. 

\section{Preliminaries}
\emph{The Orthogonal Line Segment Intersection Problem} is to preprocess a given set of horizontal segments in the plane so that queries of the following type can be answered efficiently: for a positive integer $k$ and a real value $x_0$, report the $k$ horizontal segments intersecting the vertical line $x=x_0$ in the top-to-down order \cite{Bille22}.
Using the \emph{hive-graph data structure} given by Chazelle \cite{Chazelle86FS}, one can solve the problem with $O(n)$ space and $O(\log n + k)$ query time. In the rank space where each coordinate and weight are integers in the range $[n]$, the query time can be reduced to $O(k)$ (see \cite[Lemma~$3$]{Bille22}).

We briefly describe \emph{the hive graph data structure} and explain how it can be used to answer orthogonal segment intersection queries. Let $H$ be a set of horizontal segments in the plane. The hive graph data structure for $H$ is a planar subdivision where each cell is a (possibly open) axis-parallel rectangle. Each vertical side in the subdivision passes through one of segments' endpoints. Then, the subdivision is preprocessed for \emph{the point location queries}, in which we are to compute the cell containing a query point.
To compute the horizontal segments intersected by a vertical segment $q$, we locate the cell that contains the upper endpoint of $q$, and then move along $q$ towards its lower endpoint. The horizontal segments encountered in the process are exactly those which are intersected by $q$. The segments are reported individually in the decreasing order of their $y$-coordinates. Note that if the cell containing the upper endpoint is known, finding the required horizontal segments takes time linear in the output size. See \cite[Section~4]{Chazelle86FS} for more details.

The \emph{segment tree}~\cite{deBerg08LSI} is a very useful data structure in computational geometry. It stores a set of $n$ intervals on the real line such that all interval stabbed by a query point $x_0$ can be reported in $O(\log n + \#output)$ time. It is a balanced binary search tree where the key of each node is a vertical slab $[a,b)\times \mathbb{R}$; the keys of the nodes of any particular level form a partition of the plane. The intervals 
are stored at nodes. An important property of the segment tree is that an interval is stored at a node of the root-to-leaf path for $x_0$ if and only if it is
stabbed by $x_0$.

\section{Optimal Solution}\label{sec:opt}
In this section, we 
transform the problem into another geometric problem, specifically, \emph{the orthogonal line segment intersection problem} \cite{Bille22}, for which there exists efficient algorithms.

A weighted interval can be interpreted as a horizontal segment in the plane, and these segments are preprocessed for segment intersection queries.
Let $\mathcal{I}=\{[s_i, e_i]| i\in [1, n]\}$ be the input set of intervals, where the weight of interval $[s_i, e_i]$ is $w_i$. For each interval $[s_i, e_i]$, we create a horizontal segment  $[s_i, e_i] \times w_i$: the $x$-coordinates are the endpoints of the interval and $y$-coordinate is the corresponding weight. We then construct a planar subdivision for these segments, as described by Chazelle in \cite{Chazelle86FS} (with vertical sides introduced by \emph{combing} technique). The data structure occupies $O(n)$ space.

\begin{figure}
    \centering
    \includegraphics[width=0.7\linewidth]{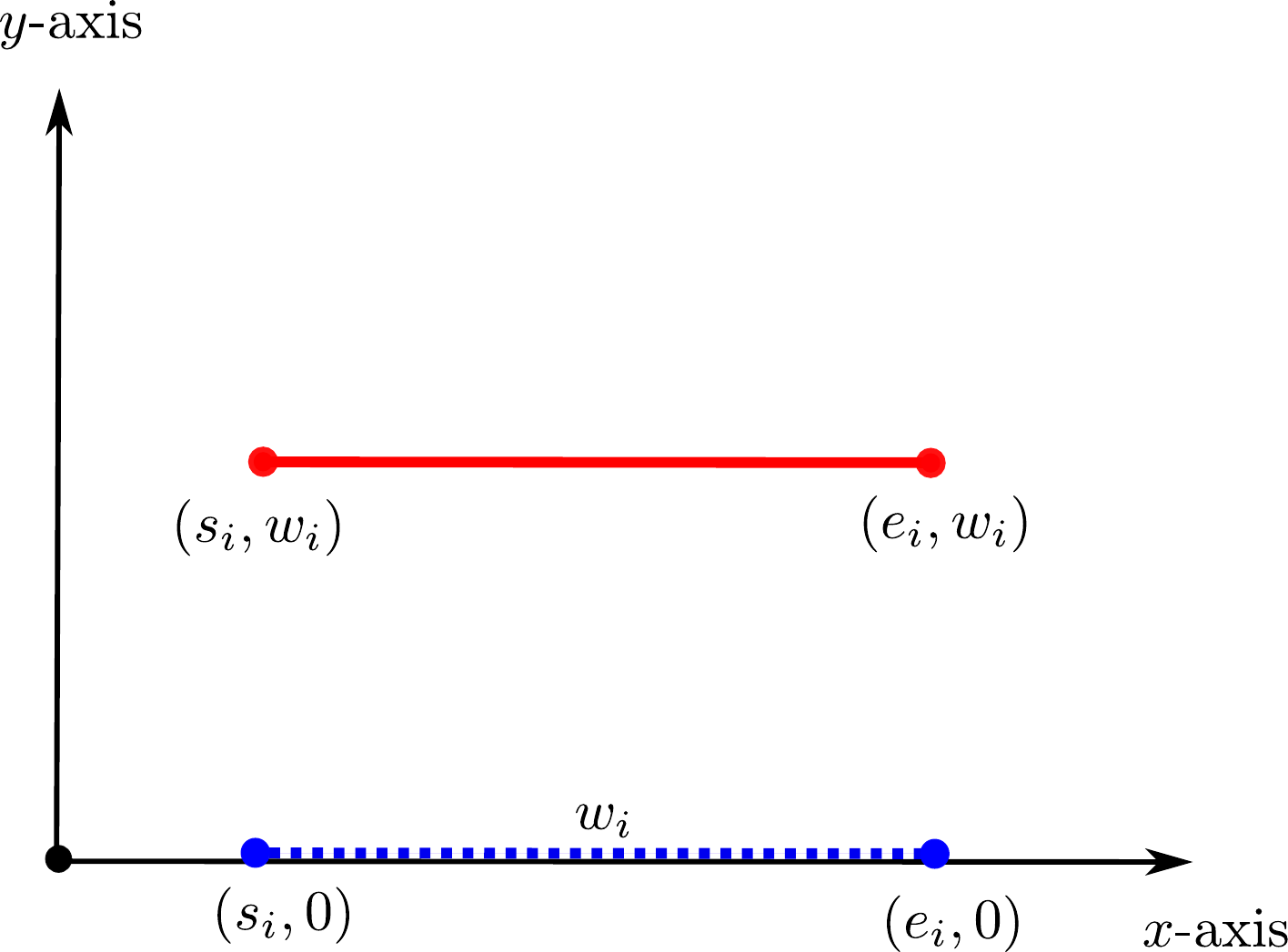}
    \caption{An interval $[s_i, e_i]$ with weight $w_i$ (dashed and blue) is interpreted as a horizontal line segment $[s_i, e_i]\times w_i$ in the plane (solid and red).}
    \label{fig:int-seg}
\end{figure}

For a given $1$-D point $x_0$ and a positive integer $k$, we locate the (open) cell in the subdivision containing the point $(x_0, +\infty)$ using a binary search. We then move downward along the vertical line $x=x_0$, and report the first $k$ segments (weighted intervals) intersected by the line. The query time is $O(\log n + k)$.

\begin{remark}
    If the intervals' endpoints and query points are all integers in the range $O(n)$, we can locate the cell enclosing the point $(x_0, +\infty)$ in constant time by maintaining a lookup table of size $O(n)$. As a consequence, the query time reduces to $O(k)$.
\end{remark}
\begin{remark}
    In the rank space, we can also employ the partial persistent data structure described in \cite[Lemma~$3$]{Bille22} and solve the problem using $O(n)$ space and $O(k)$ query time.
\end{remark}

\section{Segment Tree-Based Solution}\label{sec:segT}
In this section, we present an algorithm that employs only a segment tree and a heap structure. 
In the preprocessing phase, we build a segment tree $\mathcal{T}$ over the set $\mathcal{I}$ of weighted intervals. 
The canonical set $S(v)$ of each node $v\in \mathcal{T}$ is represented by an array, sorted by the intervals' weights.

We now describe how to answer a query.
Given a query point $x_0$ and a positive integer $k$, we first find the search path $\Pi$ for the point $x_0$ in the segment tree $\mathcal{T}$. We next access the first interval from every node $v\in\Pi$, and initialize a max heap $H$ with these intervals (with weights as keys). 
We keep repeating the following steps until $k$ intervals are reported or the heap $H$ becomes empty:
\begin{enumerate}
    \item Pop out the interval $x$ with maximal weight from $H$ and report it.
    \item Let $x$ belongs to node $v\in \Pi$. We find the next interval in the array stored at $v$ and insert it into the heap $H$.
\end{enumerate}
A pseudocode of the query algorithm is described in Algorithm~\ref{alg:segT}.
\begin{theorem}\label{thm:seg-tree}
    Given a set $\mathcal{I}$ of weighted intervals on the real line, we can represent $\mathcal{I}$ into a data structure that supports the top-$k$ interval stabbing queries in $O(\log n + k\log\log n)$ time. The data structure can be built in $O(n\log n)$ space and time.
\end{theorem}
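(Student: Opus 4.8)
The plan is to establish three things separately: the correctness of the reporting algorithm, the $O(n\log n)$ space and preprocessing bounds, and the $O(\log n + k\log\log n)$ query bound. For correctness, I would first invoke the defining property of the segment tree quoted in the Preliminaries: an interval is stored at some node on the root-to-leaf path $\Pi$ for $x_0$ if and only if it is stabbed by $x_0$. I would then observe that the canonical nodes of any single interval form an antichain (no two lie on a common root-to-leaf path), so each stabbed interval is stored at \emph{exactly one} node of $\Pi$; hence the intervals scattered across the arrays of the $O(\log n)$ nodes of $\Pi$ are precisely the stabbed intervals, with no repetition. Reporting the top $k$ of them by weight is therefore a selection of the $k$ largest elements from $O(\log n)$ sorted arrays, and the heap-based procedure is the standard $k$-way-merge selection. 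By maintaining in $H$ the current head (largest unreported element) of each array and always extracting the global maximum, a straightforward induction shows that the $j$-th element popped is the $j$-th largest stabbed interval overall; after $k$ iterations exactly the top $k$ are reported (or all stabbed intervals, if fewer than $k$ exist).

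For space and preprocessing, I would use the standard facts about segment trees: the tree skeleton has $O(n)$ nodes and each interval is placed in $O(\log n)$ canonical nodes, so the total size of all canonical arrays, and hence the space, is $O(n\log n)$. To get the arrays sorted by weight at no extra asymptotic cost, I would sort the $n$ intervals once by weight in $O(n\log n)$ time and insert them into $\mathcal{T}$ in that order; since insertion merely appends each interval to its canonical arrays, every array ends up sorted. Locating the $O(\log n)$ canonical nodes of a single interval costs $O(\log n)$, so the insertions total $O(n\log n)$, matching the cost of building the skeleton.

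For the query bound, the crucial observation, and the point I expect to be the real technical heart, is that the heap never holds more than one candidate per path node, so its size stays $O(\log n)$ throughout: it starts with one element per node of $\Pi$, and each extraction triggers at most one insertion, preserving the invariant ``at most one live candidate per array.'' Consequently every extract-max and every insert costs $O(\log\log n)$ rather than $O(\log k)$, which is precisely what separates this bound from the naive $O(k\log k)$. I would then sum the contributions: finding $\Pi$ and reading the head of each of its nodes is $O(\log n)$; building the initial heap of $O(\log n)$ elements is $O(\log n)$ by Floyd's linear-time heap construction; and the $k$ reporting iterations cost $O(k\log\log n)$. Adding these gives $O(\log n + k\log\log n)$.

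The one place I would be careful is exactly this heap-size bound in the query analysis: it is tempting to imagine the heap growing with $k$, but the seeding-plus-restore invariant caps its size at the path length $O(\log n)$, and it is this cap, not $k$, that fixes the per-operation cost at $O(\log\log n)$. Everything else reduces to routine segment-tree accounting and the textbook correctness argument for selecting the largest elements from sorted lists, so I expect no further obstacles.
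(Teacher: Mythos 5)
Your proposal is correct and follows essentially the same route as the paper: sort the intervals by weight before building the segment tree so that every canonical array is weight-sorted, then answer a query by a heap-based $k$-way merge over the $O(\log n)$ arrays on the search path, with the heap size capped at $|\Pi| = O(\log n)$ giving the $O(\log\log n)$ per-report cost. The only difference is that you spell out the correctness details (the antichain property of canonical nodes and the merge-selection induction) that the paper dismisses as obvious, and you phrase preprocessing as ordered insertion rather than building the tree over the pre-sorted set, which are the same argument in substance.
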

\begin{proof}
    The correctness of the query algorithm is obvious. Finding the path $\Pi$ takes $O(\log n)$ time. Note that the initialized heap has elements no more than the length of the path $\Pi$.
    Since building a heap $H$ takes linear time, so $O(\log n)$ time is needed to initialize the heap $H$. For each reported interval, we spent $O(\log\log n)$ time. Thus, the total time to answer the query is $O(\log n + k\log\log n)$. 
    
    We first sort the set $\mathcal{I}$ of intervals in the decreasing order of their weights, and then construct the segment tree $\mathcal{T}$ over the sorted set. This ensures that the intervals at each node of $\mathcal{T}$ are sorted by their weights. Thus, the time to construct the data structure is $O(n\log n)$.
\end{proof}
\begin{remark}
    In the rank space, we can employ the \emph{fusion node structure} \cite{PatrascuT14} to implement heap structure, which takes $O(1)$ time for each operation. Thus, the query time reduces to $O(\log n + k)$.
\end{remark}
\begin{algorithm}[t]
    Find the search path $\Pi$ in the segment tree $\mathcal{T}$ for the value $x_0$\;
    $S\gets \emptyset$\;
    \ForEach{node $v\in \Pi$}{
        $z\gets$ the first interval present in the array of node $v$\;
        $S\gets S\cup \{z\}$\;
    }
    Build a max heap $H$ over the set $S$ 
    \Repeat{$k$ intervals are reported or $H$ becomes empty}{
        Pop out the interval $x$ with maximal weight from $H$, and report it\;
        \tcc{Let the interval $x$ came from the node $v\in \Pi$.}
        Insert the next interval from array of node $v$ into $H$\;
    }
    \caption{Segment Tree Based Algorithm}
    \label{alg:segT}
\end{algorithm}

\section{Conclusion}
We studied the top-$k$ interval stabbing query problem and presented efficient algorithms. First, we showed that problem can be transformed into a variant of \emph{the orthogonal segment intersection problem}. Consequently, we obtained the first linear time solution with $O(\log n + k)$ query time. 
We then presented a simple algorithm that employs only segment tree and heap structures, and solves the problem using $O(n\log n)$ space and $O(\log n + k\log \log n)$ query time.

\end{document}